\documentclass{article}
\usepackage{ijcai26}

\usepackage{times}
\usepackage{soul}
\usepackage{url}
\usepackage[hidelinks]{hyperref}
\usepackage[utf8]{inputenc}
\usepackage[small]{caption}
\usepackage{graphicx}
\usepackage{amsmath}
\usepackage{amsthm}
\usepackage{booktabs}
\usepackage{algorithmic}
\usepackage[switch]{lineno}
\usepackage{xcolor}

\newtheorem{example}{Example}
\newtheorem{theorem}{Theorem}
\usepackage{balance} % for balancing columns on the final page
\usepackage[ruled,vlined]{algorithm2e}
\SetKwComment{Comment}{/* }{ */}
\SetKwInput{KwIn}{Input}
\SetKwInput{KwOut}{Output}
\title{Finding Pareto frontier for one-sided matching}

\author{
    Anonymous author(s)
}

\author{
Bhavik Dodda$^1$
\and
Garima Shakya$^2$\and
\affiliations
$^1$Sardar Vallabhbhai National Institute of Technology Surat\\
$^2$Indian Institute of Technology Palakkad\\
\emails
bhavikdodda22@gmail.com,
garima@iitpkd.ac.in
}
\begin{document}

\maketitle

\begin{abstract}
   One-sided matching problems with ordinal preferences, such as hostel room allocation, are commonly solved using the Top Trading Cycles (TTC) mechanism, which guarantees Pareto-optimal (PO) outcomes. However, TTC does not yield a unique solution: multiple PO allocations may exist, and many distinct initial endowments can converge to the same outcome. Focusing on a single TTC result obscures the structure of the Pareto-efficient frontier and limits principled secondary optimization over fairness or welfare objectives. Therefore, the goal is to find the entire set of PO allocations for a given preference profile. We propose the Inverse Top Trading Cycles Enumeration Algorithm (ITEA), a novel method that efficiently computes the complete set of Pareto-optimal allocations in one-sided matching problems. We prove the soundness and completeness of the proposed algorithm and analyze its computational complexity. Although in the worst case, there can be $n!$ PO allocations; however, compared to the brute-force approach, our algorithm reduces time complexity when there are fewer PO allocations. Empirical results demonstrate substantial reductions in redundant TTC computations compared to brute-force enumeration, enabling efficient characterization of the Pareto frontier.
\end{abstract}

\section{Introduction}
% \subsection*{Motivation}

This paper examines one-to-one one-sided matching problems, such as those arising in hostel room allocation. In these settings, agents (e.g., students) submit strict ordinal preferences over the set of feasible items, and monetary compensation is not permitted. For example, at the start of an academic year, universities must assign hundreds of new students to available hostel rooms. Each student ranks rooms based on her individual reasons, such as location, amenities, or proximity to friends.  

A matching is considered Pareto optimal (PO) if no alternative matching can make at least one agent better off without making another agent worse off. This principle is central to mechanism design, as it eliminates allocations that fail to exploit mutually beneficial exchanges fully. The popularly known Top Trading Cycles (TTC) algorithm finds a PO allocation in polynomial time.

However, Pareto optimality rarely results in a unique outcome. Multiple PO assignments can exist, yielding diverse item distributions across agents \cite{abraham2004pareto}. This multiplicity is significant for both mechanism design and algorithmic evaluation, as distinct mechanisms or variations in tie-breaking and priority rules may yield different PO outcomes. Focusing on a single Pareto-optimal assignment can obscure necessary trade-offs in welfare and fairness, especially when randomized procedures are used. For example, random serial dictatorship (agents get the opportunity to select the available items in a randomized serial order) can be asymptotically inefficient in an ordinal sense, even though each deterministic outcome it produces is PO \cite{abdulkadirouglu2003ordinal,aziz2018tradeoff}.

These considerations motivate a shift: instead of just finding a Pareto-efficient assignment, we characterize the \textit{Pareto-efficient frontier}—the set of all PO assignments. By explicitly computing or succinctly representing this set, we create a principled basis for secondary optimizations. Once the PO frontier is identified, designers can choose those outcomes from this set that guarantee additional criteria, such as welfare maximization and fairness.

Note that finding the entire Pareto frontier is output-exponential. Finding the complete set of all Pareto-optimal matchings is an enumeration problem. In the absolute worst case, every possible allocation could be Pareto optimal. Because the output size can be factorial of the number of agents, the total time required to list all solutions is exponential in the input size. 

% \subsection{Our contributions}
\textbf{Our contributions}
\begin{itemize}
    \item We propose a new algorithm to compute the Pareto-efficient frontier.
\item We theoretically analyze the proposed algorithm, establishing its guarantees, including correctness and reduced computational complexity.
\item We empirically evaluate the algorithm on synthetically generated data to quantify the reduction in computational complexity relative to a brute-force algorithm.
\end{itemize}

% ---
\section{Related Literature}
\label{sec:related}

\paragraph{Counting reachable houses:}
Asinowski et al.~\cite{asinowski2016counting} examine the combinatorics and computational complexity in Pareto-optimal matchings for the classical house-allocation problem. Instead of limiting their analysis to a single PO matching, such as that in serial dictatorship, the authors study the complete set of PO outcomes and the corresponding houses. They define a house as \emph{reachable} if it appears in at least one PO matching. The study shows a tight upper bound of $\Theta(n\log n)$ on the number of reachable houses in an $n \times n$ instance, where $n$ is the number of agents, thus improving on the trivial $n^2$ bound.

 \paragraph{Time complexity of forward Top Trading Cycle (TTC):}

The TTC algorithm is known to run in polynomial time. The running time depends on how agents’ preference lists are maintained during execution. In a list-based implementation, TTC proceeds in at most $n$ rounds, and in each round, all remaining agents must update their preference lists by removing allocated houses. When $k$ agents remain, this explicit filtering requires scanning $k$ lists of length $O(k)$, resulting in an $O(k^2)$ cost for that round and an overall $O(n^3)$ time complexity. In a pointer- or index-based (optimized) implementation, each agent stores a pointer to their next available top choice, avoiding repeated scans. Each round then costs $O(k)$, and the cumulative cost across all rounds becomes
    $T_{\text{optimized}}(n) = O\!\left(\sum_{k=1}^{n} k\right)
        = O(n^2).$

\paragraph{Preference restrictions enabling PO with strong fairness:}
A focused approach to furthering the efficiency–fairness frontier in allocation problems is to restrict agents' preferences to structured ordinal domains. 

In the lexicographic preferences setting, where agents rank bundles strictly by their highest-ranked differing item, Hosseini et al. examine EFX (envy-free up to any good) in conjunction with PO and demonstrate the existence of allocations that satisfy both properties, providing an algorithmic characterization of these outcomes. The authors further analyze implementability by characterizing mechanisms that also satisfy strategyproofness (no agent can benefit by misrepresenting preferences), non-bossiness (an agent cannot change others’ assignments without affecting their own), and neutrality (symmetry across agents and houses). They show that strengthening efficiency to rank-maximality, which selects allocations maximizing agents' top-ranked choices) can yield non-existence and hardness (with tractability can re-emerge under weaker fairness notions such as MMS (maximin share fairness).

\paragraph{Approximate Pareto sets, fairness and efficient allocation in cardinal preferences:}
Algorithmic fairness and efficiency are increasingly interconnected. Beyond ordinal preferences with systematic constraints, recent research investigates approximate Pareto sets in contexts involving cardinal utilities. The literature on additive valuation, where an agent’s value for a set is the sum of their item-by-item values, and fair division offers algorithmic techniques for constructing and certifying Pareto-efficient outcomes alongside relaxed fairness notions. These developments inform the design and computation of Pareto frontiers, which capture trade-offs between efficiency and fairness \cite{kilgour2024two}.

Nguyen and Rothe~\cite{nguyen2020approximate} examine the fair division of indivisible goods under additive valuations from a bi-criteria perspective, emphasizing max–min fairness (maximizing the minimum utility, giving focus to egalitarian welfare) and utilitarian efficiency (maximizing total welfare, the sum of all agents’ values). They formalize the trade-off between fairness and efficiency using the Pareto frontier and introduce the computation of an $\ epsilon$-Pareto set, which is a polynomial-size set of allocations that approximates the full Pareto frontier within a multiplicative $(1-\epsilon)$ factor for both criteria.

Complementary algorithmic results demonstrate that Pareto efficiency can be combined with relaxed fairness guarantees in pseudo-polynomial time under additive valuations. Pseudo-polynomial time refers to algorithms whose running time is polynomial in the numeric value of some input numbers, such as a capacity, rather than in the bit-length of those numbers.
Barman et al.~\cite{barman2018fef1po} give a pseudo-polynomial-time algorithm that computes allocations satisfying EF1 (envy-free up to one good) and PO, using an integral Fisher-market (agents are buyers with budgets, goods have prices, and each agent demands goods that maximize their value-per-price) construction and welfare-theorem arguments to certify (fractional) Pareto efficiency. 
Freeman et al.~\cite{freeman2019equitable} study \emph{equitability} and show that Leximin yields equitability up to any good together with PO. They provide a pseudo-polynomial-time algorithm for computing PO allocations that are equitable up to one good (EQ1). 
More recently, Garg and Murhekar~\cite{garg2024jair} designed pseudo-polynomial algorithms that achieve EF1+fractional-PO and EQ1+fractional-PO, improving earlier EF1+PO guarantees. 

% \paragraph{}

%\paragraph{Fairness optimization with PO constraints:}
%For the canonical house-allocation setting (one house per agent), Hosseini et al.~\cite{hosseini2025fairsocieties} study algorithmic approaches to fairness objectives, such as minimizing the number of envious agents, under computational hardness, and they extend these to incorporate Pareto efficiency in structured preference domains. Their refinement-via-reallocation viewpoint and domain-restriction results further support the idea that characterizing and computing attainable trade-offs is often more productive than optimizing a single objective alone. 

There is some literature on Pareto optimality in one-sided matching problems in diverse contexts, such as computational complexity \cite{lock2024computational}, uncertain preferences \cite{aziz2019pareto}, popular matching \cite{cseh2022pareto}, single-peaked preference domain \cite{beynier2021swap}, and rank-maximality \cite{Hosseini_Menon_Shah_Sikdar_2021}.

Although the literature extensively addresses the combination of PO with other properties, the problem of identifying a set of PO allocations remains largely unexplored. This gap highlights the novelty of the present contribution.

\section{Model and Notation}
An instance of a hostel allocation problem is often given by a triple $\mathcal{H} =(\mathcal{N},\mathcal{R,O})$ where
$\mathcal{N}=\{ 1,2,3,\dots n\}$ be the set of $n$ agents, $\mathcal{R}=\{ r_1,r_2,r_3,\dots r_n\}$ be the set of $n$ rooms, and $\mathcal{O}=(o_1,o_2,o_3,\dots,o_n)$ denote the preference profile consisting of ordinal strict preference order of each agent from $\mathcal{N}$ over rooms in $\mathcal{R}$. We denote the set of all possible allocations by \( \mathcal{A} \):
\[
\mathcal{A} = \{ \sigma : \mathcal{N} \leftrightarrow \mathcal{R} \}, \quad |\mathcal{A}| = n!
\]
where, $\sigma(i)$ denotes the room allocated to agent $i$ and $\sigma(r_j)$ denotes the agent to whom the room $r_j$ is allocated.
The set of all distinct PO allocations is denoted by \( \mathcal{P} \).
Top Trading Cycles (TTC) finds an assignment from an initial endowment by repeatedly constructing a directed graph in which each agent points to their most-preferred remaining house, and each house points to its current owner. Every round contains at least one directed cycle; TTC executes all such cycles by assigning each agent in a cycle the house they point to, then removing those agents and houses and repeating until all assignments are made.

We denote the TTC mapping from an initial endowment $\sigma$ as $f$,
\[
f : \mathcal{A} \to \mathcal{P},
\quad f(\sigma) = \text{TTC}(\sigma),
\]
In other words, $f$ maps each initial allocation $\sigma$ to its final Pareto-optimal outcome, denoted by $\text{TTC}(\sigma)$. The function $f^{-1}(p)$ denoted inverting \( f \) to constructs the preimage of $p\in \mathcal{P}$.
\[
f^{-1}(p) = \{ \sigma \in \mathcal{A} \mid f(\sigma) = p \}
\]
% ---
% \subsection{}
\textbf{Equivalence Relation and Partitioning:}\ In the next section, we propose an algorithm that computes the entire Pareto frontier by iteratively inverting \( f \) for each new outcome \( p \in \mathcal{P} \), it constructs the preimage of $p\in \mathcal{P}$ and removes these allocations from further consideration.  
Thus, each element of \( \mathcal{A} \) is processed exactly once.

As multiple initial allocations can result in the same final allocation by TTC, an equivalence relation \( \sim \) is naturally induced on \( \mathcal{A} \) by:
\[
\sigma_1 \sim \sigma_2 \iff f(\sigma_1) = f(\sigma_2).
\]
Each equivalence class corresponds to the set of initial allocations yielding the same final TTC outcome, that is, each equivalence class corresponds to the preimage
\( f^{-1}(p_i)\).
\[
[\sigma] = f^{-1}(f(\sigma)).
\]
These classes form the quotient set:
\[
\mathcal{A} / \! \sim = \{ f^{-1}(p) \mid p \in \mathcal{P} \},
\]
which is in bijection with the set of distinct Pareto-optimal allocations \( \mathcal{P} \).

\section{Proposed algorithm}
\label{sec:algo}
One brute-force approach to finding the Pareto frontier is to apply TTC to every possible initial endowment and then collect the distinct PO allocations. However, such a brute-force enumeration of all possible initial endowments requires \( n! \) separate TTC runs, each costing \( O(n^3) \) in naive implementations. This leads to a total computational cost of
\( O(n!\ n^3)\) and significant redundancy, as many initial allocations converge to the same final Pareto-optimal outcome.

To overcome this redundancy, we introduce an
\textbf{Inverse-TTC Enumeration Algorithm \texttt{(ITEA)}} (Algorithm~\ref{alg:inverse_ttc}), which systematically
inverts the TTC mapping to partition the space of all initial
allocations into disjoint equivalence classes, each corresponding to a
unique Pareto-optimal outcome. This approach enables efficient
computation of the entire Pareto frontier without redundant TTC
evaluations. % Let \( f : \mathcal{A} \to \mathcal{P} \) denote the TTC mapping, where
% \( \mathcal{A} \) is the set of all initial allocations (\( |\mathcal{A}| = N! \))
% and \( \mathcal{P} \) is the set of distinct Pareto-optimal outcomes.
% Two allocations are equivalent under the relation
% \[
% \sigma_1 \sim \sigma_2 \iff f(\sigma_1) = f(\sigma_2),
% \]
% and each equivalence class corresponds to the preimage
% \( f^{-1}(p_i) = \{\sigma \in \mathcal{A} \mid f(\sigma)=p_i\} \).
% The proposed algorithm explicitly constructs these classes through
% inverse traversal of TTC dynamics.
% \subsection{Inverse-TTC Enumeration of the Pareto Frontier}
% \label{sec:stepbystep}
Figure~\ref{fig:flowchart} and \ref{fig:flowchart2} present \texttt{ITEA} in a flowchart. The brief step-by-step explanation of the algorithm is as follows.
\paragraph{1) Initialization.}
Generate the set of all possible initial allocations
\( \mathcal{A} = \{ \sigma_1, \sigma_2, \ldots, \sigma_{n!} \} \)
and store it in a set \texttt{scan}.  Each element is a bijection
between people and rooms.
    
\paragraph{2) Forward TTC.}
Select an unvisited allocation \( \sigma \in \texttt{scan} \),
execute the TTC mechanism to compute the corresponding Pareto-optimal
allocation \( f(\sigma) \), and record it as a candidate outcome.

\paragraph{3) Inverse reconstruction of allocations (\texttt{devour()} using \texttt{dressup()}).}
Given the final allocation \( f(\sigma) \), construct a tagged representation of rooms, where each room is paired
with a tag in \(\{-1,0,1\}\) which represent
\begin{itemize}
    \item $-1$: unmarked (not yet active),
    \item $0$: circle (eligible to shuffle),
    \item $1$: square (fixed in position).
\end{itemize}

Initially, all rooms are tagged with $-1$, indicating that no rooms are fixed and are eligible to be shuffled. The \texttt{dressup()} procedure updates the tags of the rooms. Given the current set of fixed rooms (which is initially empty), it iteratively tags \emph{unmarked} ($-1$) rooms as \emph{circles} ($0$) only when they become top choices for their respective owners among the non-fixed rooms. %Essentially, it checks whether the room assigned to an agent in $\sigma$ is also the top preferred room by that agent among the non-fixed rooms and 

The \texttt{devour()} routine then explores all valid predecessor
configurations by:
\begin{enumerate}
    \item Generating all permutations of rooms with tag 0 and promoting the ones that have changed their position to squares 1,
    \item Promoting any subset of circles to squares (tag = 1),
    \item Recomputing \texttt{dressup()} to activate new circles,
    \item Adding each unique tagged state to a breadth-first search queue.
\end{enumerate}
This process continues until all tags become 1, corresponding to one of the initial allocations.

\paragraph{4) Equivalence class identification.}
The set of all initial allocations discovered in Step~3 forms the
preimage \( f^{-1}(f(\sigma)) \), i.e., the entire equivalence class of
allocations that produce the same Pareto-optimal result under TTC.

\paragraph{5) Domain pruning.}
All allocations in \( f^{-1}(f(\sigma)) \) except $\sigma$ are removed from
\texttt{scan}, ensuring that each equivalence class is explored exactly
once. The process repeats until \texttt{scan} becomes empty.

\begin{example}
    Given the preference profile of $\mathcal{N}=\{1,2,3,4,5\}$, in this example, we try to find the set of allocations such that TTC on them would converge to the pareto optimal allotment as $\sigma= ((1,r_4),(2,r_3), (3,r_2), (4,r_1), (5,r_5))$.

\[
\begin{array}{c|l}
\textbf{$\mathcal{N}$} & \textbf{Preference profile, $\mathcal{O}$} \\ \hline
1 & r_4 \succ r_3 \succ r_2 \succ r_1 \succ r_5 \\
2 & r_3 \succ r_4 \succ r_1 \succ r_2 \succ r_5 \\
3 & r_1 \succ r_2 \succ r_3 \succ r_4 \succ r_5 \\
4 & r_1 \succ r_5 \succ r_3 \succ r_2 \succ r_4 \\
5 & r_2 \succ r_3 \succ r_4 \succ r_5 \succ r_1
\end{array}
\]

% \begin{figure}
%     \centering
%     \includegraphics[width=1.0\linewidth]{images/invttc ex.png}
%     \caption{12 allocations in invTTC([4,3,2,1,5]) \textcolor{red}{Add images in .eps or pdf format.}}
%     \label{fig:invttcexample}
% \end{figure}

\begin{figure}
    \centering
    \includegraphics[width=1.0\linewidth]{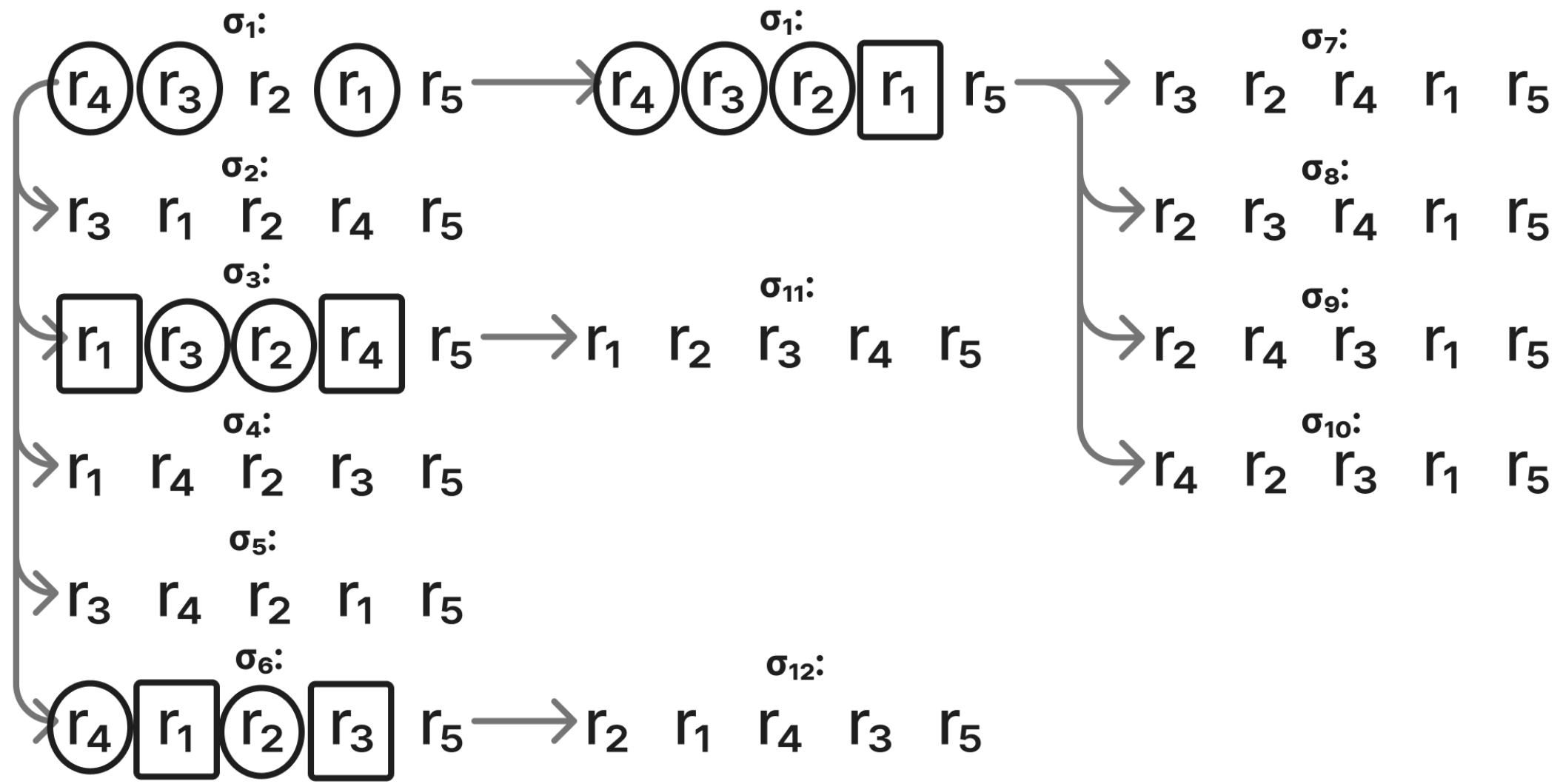}
    \caption{12 allocations in \texttt{invTTC}([4,3,2,1,5])}
    \label{fig:invttcexample}
\end{figure}

We start with all rooms initially unmarked $(-1)$. Applying \texttt{dressup} identifies the rooms $r_4,r_3,r_1$ that are already top choices for their owners in $\sigma$; these are marked as circles $(0)$, while the rest remain unmarked.% In this instance, rooms 4,3,1 become circles, reflecting the agents who would have claimed them immediately in forward TTC.

Through \texttt{devour}, the circled rooms are allowed to shuffle among themselves to produce new allocations for invTTC. In the figure~\ref{fig:invttcexample}, it produces 5 new allocations, $\sigma_2$ to $\sigma_6$, listed below $\sigma_1$.

Next, for each of the six allocations from the previous step, \texttt{devour} iteratively promotes each possible subset of the circled rooms to squares. The rooms in that subset are considered fixed and no longer eligible to be shuffled further in that iteration.
%When a circled room is promoted to square (Part 3 of \ref{sec:stepbystep}), the fixed room is ignored from the preference order of each person.
Whenever a room is promoted to square, it is removed from all preference lists. This may expose a new top remaining choice for some agent, which deterministically becomes a new circle under \texttt{dressup}. For example, in the allocation $\sigma_1=((1,r_4),(2,r_3), (3,r_2), (4,r_1), (5,r_5))$ after fixing room $r_1$, ineligible to be shuffled or considered removed from the preference profile, which makes room $r_2$ the top remaining choice for $3$. Hence, the unmarked $r_2$ is promoted to a circle and is eligible to shuffle. This generates the four new allocations listed beside $\sigma_1$. Any other combination of circled rooms, when squared, does not create any new circled rooms, hence does not yield any new extended allocation. Therefore, those steps are omitted from the fig~\ref{fig:invttcexample}.

From $\sigma_2$, there are no extensions after promoting any subset of circle rooms to a square. For instance, promoting $r_3,r_1$, and $r_4$ makes $r_2$ circled leaves $r_5$ unmarked. Therefore, no further shuffles are possible from a single circled room ($r_2$). Similarly, for $\sigma_4$ and $\sigma_5$. 

Note that, for all the allocations in this example, the steps that do not bring new allocations are omitted from the fig~\ref{fig:invttcexample}; however, \texttt{devour} checks for each subset of circles rooms for promotion and calls \texttt{dressup} in search for extended allocations.
Since each promotion to square permanently fixes a room, and there are finitely many rooms, every branch of devour strictly progresses toward an all-square state. Hence, the procedure always terminates.

\end{example}
% \subsubsection{Example}

\begin{algorithm}[tb]
    \caption{\texttt{Inverse-TTC Enumeration Algorithm (ITEA)}}
    \label{alg:inverse_ttc}
    \textbf{Input}: Preference profile $\mathcal{O}$ for agents in $\mathcal{N}$ over rooms in $\mathcal{R}$\\
    \textbf{Output}: Set of Pareto-optimal allocations $\mathcal{P}^*$
    \begin{algorithmic}[1] %[1] enables line numbers
        \STATE Initialize $\texttt{scan} \gets \mathcal{A}$, $\mathcal{P}^*=\emptyset$.
        \WHILE{$\texttt{scan} \neq \emptyset$}
        \STATE Pick allocation  $\sigma \in \texttt{scan}$.
        \STATE Compute $\sigma^*=f(\sigma)$ \COMMENT{Find TTC allocation with initial endowment $\sigma$.} 
        \IF {$\sigma^* \notin \mathcal{P}^*$}
        \STATE $\mathcal{P}^*=\mathcal{P}^*\cup \{\sigma^*\}$.
        \STATE $\tau_0 \gets [(\sigma^*(i), -1)]_{i=1}^n$. \COMMENT{Each room is tagged as -1, initially }
        \STATE $\tau \gets$ \texttt{dressup}($\sigma, \tau_{0}, \mathcal{O}, \emptyset$). \COMMENT{Update the tag of each room according to $\sigma^*$}
        \STATE $\texttt{initials} \gets$ \texttt{devour}($\tau$, $\mathcal{O}$). \COMMENT{Find $f^{-1}(\sigma^*)$ }
        \STATE Remove $\texttt{initials}$ from $\texttt{scan}$.

        \ENDIF
        \ENDWHILE
        \STATE \textbf{return} $\mathcal{P}^*$.
    \end{algorithmic}
\end{algorithm}

% \begin{algorithm}[h!]
% \caption{}
% % \label{alg:inverse_ttc}
% % \KwIn{Preference profile $\mathcal{P}$ for $n$ agents}
% \KwOut{Set of Pareto-optimal allocations $\mathcal{P}^*$}
% Initialize $\texttt{scan} \gets \mathcal{A}$, $\mathcal{P}^*=\phi$\; 
% \While{$\texttt{scan} \neq \phi$}{
%     Pick allocation  $\sigma \in \texttt{scan}$\;
%     Compute $\sigma^*=f(\sigma)$%/* Find TTC allocation with initial endowment $\sigma$. */ 
%     \;
%     \If{$\sigma^* \notin \mathcal{P}^*$}{
%         $\mathcal{P}^*=\mathcal{P}^*\cup \{\sigma^*\}$\;
%         $\tau_0 \gets [(\sigma^*(i), -1)]_{i=1}^n$ /* each room is tagged as -1, initially */
%         \;
%         $\tau \gets$ \texttt{dressup}($\tau_{0},\phi$) /* Update the tag of each room according to $\sigma^*$ */\;
%         $\texttt{initials} \gets$ \texttt{devour}($\tau$)\;
%         Remove $\texttt{initials}$ from $\texttt{scan}$\;
%     }
% }
% \Return $\mathcal{P}^*$\;
% \end{algorithm}

\begin{algorithm}[h!]
    \caption{\texttt{devour()}: Inverse traversal from final allocation}
\label{alg:devour}
    \textbf{Input}: Room tags $\tau = [(r_1, t_1), \dots, (r_n, t_n)]$, $\mathcal{O}$\\
    \textbf{Output}: Set of all valid initial allocations $\mathcal{Z}$  w.r.t. to $\tau$.
    \begin{algorithmic}[1] %[1] enables line numbers
        \STATE Initialize a breadth-first search queue $Q \gets \{\tau\}$, result set $\mathcal{Z} \gets \emptyset$, visited set $\mathcal{V} \gets \emptyset$.
        % \STATE Initialize visited set $\mathcal{V} \gets \emptyset$.
        \WHILE{$Q \neq \emptyset$}
        \STATE Pop tag state $s = [(r_1, t_1), \dots, (r_n, t_n)]$ from $\mathcal{Q}$.
            \IF{$s \in \mathcal{V}$} %\COMMENT{If all rooms are fixed.}
            \STATE {continue}
            \ENDIF
        % \ELSE 
        \STATE Add $s$ to $\mathcal{V}$.
        % \ENDIF
        \IF{$\forall{i\in [1,2,\cdots,n]}, t_i = 1$}
        \STATE  Add an allocation $((1,r_1),(2,r_2), \dots, (n,r_n))$ to $\mathcal{Z}$.
        \STATE continue
        \ENDIF\\
        
        \COMMENT{Next, we generate new allocations by shuffling/promoting circled rooms in $s$}.
        \STATE $\mathcal{C} = \{i \mid t_i = 0\}$ \COMMENT{Identify indices of circled rooms.}
    \STATE $\mathbf{r}_\mathcal{C} = (r_i \mid t_i = 0)$ \COMMENT{Extract circled rooms}
    
    \FOR{each permutation $\pi$ of $\mathcal{C}$ }
        \STATE Create new allocation $\sigma=((1,r'_1),(2,r'_2),...,(n,r'_n))$, where $\forall i\in C: r'_i = r_{\pi(i)}; \forall i\notin C: r'_i=r_i$
        \STATE  Copy tags.  $\mathbf{t_i}' \gets \mathbf{t_i}$, $\forall i\in [1,2,\cdots,n]$
        \FOR{$i \in \mathcal{C}$}
            \IF{$r'_i \neq r_i$ }
            \STATE $t'_i \gets 1$. \COMMENT{If $r_i$ is shuffled in $\pi$ then promote its tag from $0$ to $1$.}
            \ENDIF
         \ENDFOR
         \STATE Construct new tagged state $\tau' = [(r'_1, t'_1), \dots, (r'_n, t'_n)]$
        \STATE Identify fixed rooms $\mathcal{D} = \{r_i' \mid t'_i = 1\}$
        \STATE $\tau^{''}=$ \texttt{dressup}$(\sigma, \tau', \mathcal{O}, \mathcal{D})$ \COMMENT{Apply \texttt{dressup()} to update rooms' tags.}
       \STATE Add $\tau^{''}$ to $Q$.
    \ENDFOR
    \FOR{$i \in \mathcal{C}$ }
    \STATE Copy tags.  $\mathbf{t_i}' \gets \mathbf{t_i}$, $\forall i\in [1,2,\cdots,n]$
        \STATE Promote the tag of $r_i$. Set $\mathbf{t_i}' \gets 1$.
        \STATE Construct new tagged state $\tau' = [(r'_1, t'_1), \dots, (r'_n, t'_n)]$.
        \STATE $\tau^{''}=$ \texttt{dressup}$(\sigma, \tau', \mathcal{O}, \mathcal{D})$ \COMMENT{Apply \texttt{dressup()} to update rooms' tags.}
        \STATE Add $\tau^{''}$ to $Q$.
    \ENDFOR
        
    \ENDWHILE
        \STATE \textbf{return} $\mathcal{Z}$.
    \end{algorithmic}
\end{algorithm}

\begin{algorithm}[tb]
    \caption{\texttt{dressup()}: Update rooms' tags in allocation}
   \label{alg:dressup}
    \textbf{Input}: $\sigma$, Room tag state $\tau = [(r_1, t_1), \dots, (r_n, t_n)]$, preference profile $\mathcal{O}$, set of fixed rooms $\mathcal{D}$. \\
    \textbf{Output}: Updated room tag state $\tau'$.
    \begin{algorithmic}[1] %[1] enables line numbers
        \STATE Initialize $\tau' \gets [\ ]$
        \FOR{$i = 1$ to $n$}
        \IF{$t_i = 1$}
        \STATE Append $(r_i, 1)$ to $\tau'$.
        \ELSE
        % \STATE Let $o_i$ be agent $i$’s preference list.
        \STATE $o_{\sigma(r_i)}' \gets [r \in o_{\sigma(r_i)} \mid r \notin \mathcal{D}]$. /* Filter out fixed rooms from the preference of the agent who is assigned to $r_i$ in $\sigma$. */
        % \IF{$o_{\sigma(r_i)}'$ is nonempty and $o_{\sigma(r_i)}'(1) = r_i$}
        \IF{$o_{\sigma(r_i)}'(1) = r_i$}
        \STATE Append $(r_i, 0)$ to $\tau'$
        \ELSE
        \STATE Append $(r_i, -1)$ to $\tau'$
        \ENDIF
        \ENDIF
        \ENDFOR
        \STATE \textbf{return} $\tau'$.
    \end{algorithmic}
\end{algorithm}

% \begin{algorithm}[h!]
% \caption{\texttt{dressup()}: Update room tags based on preferences}
% \label{alg:dressup}
% \KwIn{Room tags $\tau = [(r_1, t_1), \dots, (r_n, t_n)]$, set of fixed rooms $\mathcal{D}$, preference profile $\mathcal{P}$}

% \KwOut{Updated tagged allocation $\tau'$}

% Initialize $\tau' \gets []$\;
% \For{$i = 1$ \KwTo $n$}{
%     \eIf{$t_i = 1$}{
%         Append $(r_i, 1)$ to $\tau'$\;
%     }{
%         Let $p_i$ be person $i$’s preference list\;
%         Filter out fixed rooms: $p_i' \gets [r \in p_i \mid r \notin \mathcal{D}]$\;
%         \eIf{$p_i'$ is nonempty and $\text{top}(p_i') = r_i$}{
%             Append $(r_i, 0)$ to $\tau'$\;
%         }{
%             Append $(r_i, -1)$ to $\tau'$\;
%         }
%     }
% }
% \Return $\tau'$\;
% \end{algorithm}
% \subsection{Algorithm Flowchart}
\begin{figure}[tb]
    \centering
    \includegraphics[width=0.7\linewidth]{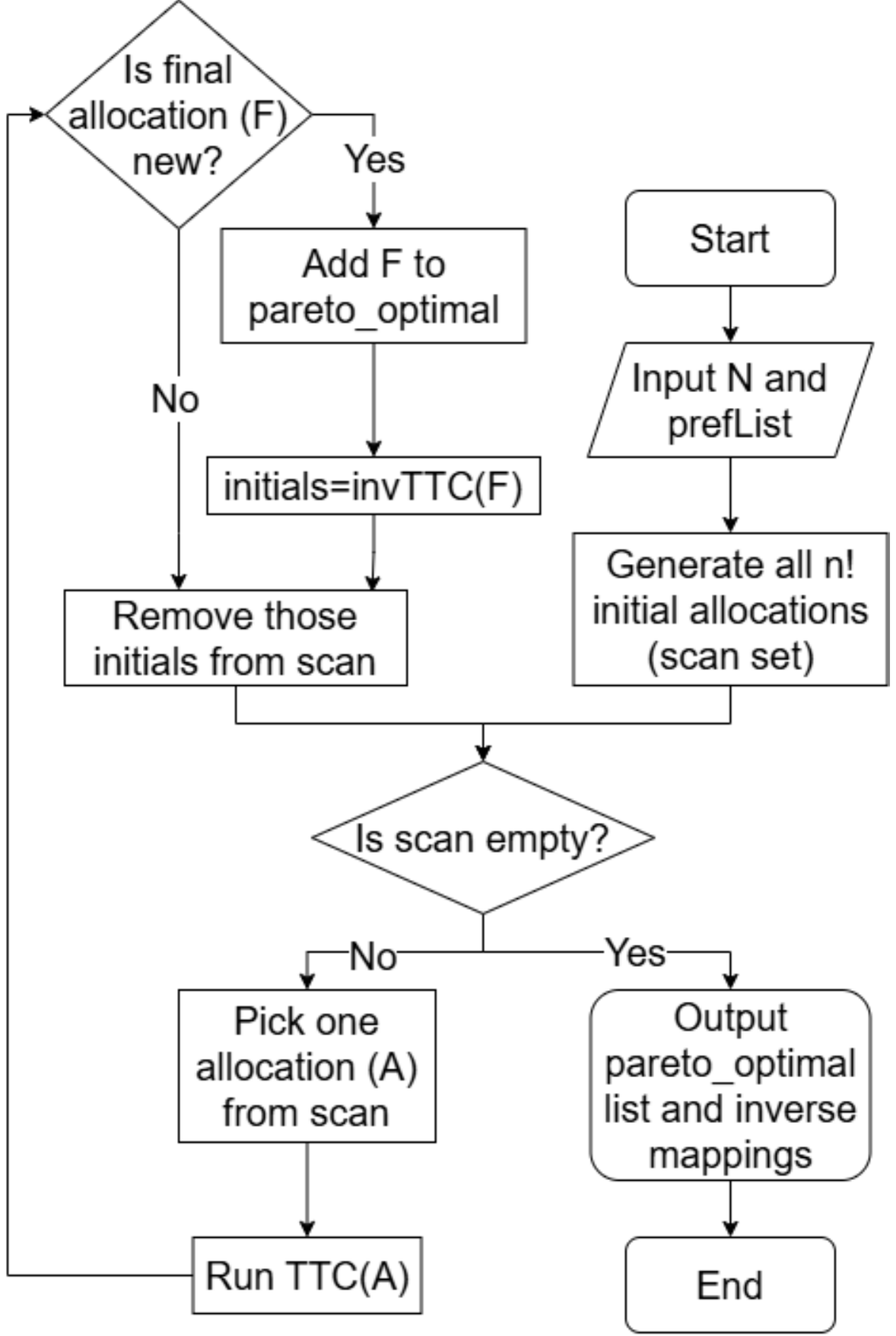}
    \caption{Algorithm Flowchart}
    \label{fig:flowchart}
\end{figure}

\begin{figure}[tb]
    \centering
    \includegraphics[width=0.7\linewidth]{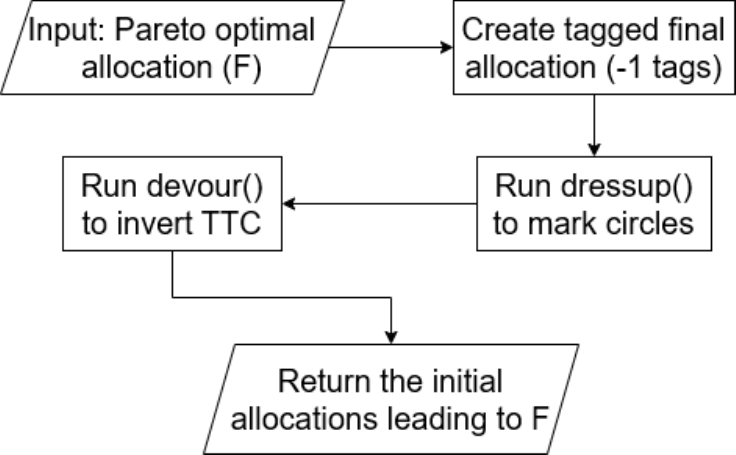}
    \caption{InvTTC}
    \label{fig:flowchart2}
\end{figure}

\section{Correctness of Inverse-TTC Enumeration}
For a PO outcome $\sigma$, let $\texttt{invTTC}(\sigma)$ denote the set of allocations returned by the
inverse-TTC procedure (Figure~\ref{fig:flowchart2}).

\subsection{Soundness}

For each allocation $\sigma' \in \texttt{invTTC}(\sigma)$,
\[
f(\sigma') = \sigma.
\]

\begin{proof}
During the execution of $\texttt{invTTC}$, each room $r \in R$ is assigned
a label $\ell(r) \in \{-1,0,1\}$, corresponding respectively to
\emph{unmarked}, \emph{circled}, and \emph{squared}.
Let
\[
S := \{ r \in R \mid \ell(r) = 1 \}
\]
denote the set of squared rooms at a given stage. For an agent $i \in N$, let $o_{i}^{S}(1)$ denote her first preferred room in $\mathcal{R\setminus S}$.

Now let $\sigma' \in \texttt{invTTC}(\sigma)$ be a terminal inverse state,
i.e., all rooms are squared.
We prove by induction on the number of rounds of the forward TTC
mechanism that applying TTC to $\sigma'$ yields $\sigma$.

\noindent\textbf{Invariant.}
At every stage of $\texttt{invTTC}$, the following property holds:
\[
\tag{$\star$}
\text{If room } r \text{ is circled and assigned to agent } i,
\text{ then } r = o_{i}^{S}(1).
\]

\medskip

\medskip
\noindent\emph{Base case.}
Let $S_1$ denote the set of rooms squared last in the execution of
$\texttt{invTTC}$.
By invariant $(\star)$, each agent assigned a room in $S_1$ ranks that
room as her top choice among all remaining rooms.
Thus, in the first TTC round applied to $\sigma'$, these agents form
directed cycles and are assigned exactly the rooms in $S_1$, matching
their assignments in $\sigma$.

\medskip
\noindent\emph{Inductive step.}
Assume that after $k \ge 1$ TTC rounds, the rooms
$S_1 \cup \cdots \cup S_k$ have been assigned exactly as in $\sigma$.
Let $S_{k+1}$ be the set of rooms squared in the $(k+1)$-st inverse step
(counting backwards).
After removing $S_1 \cup \cdots \cup S_k$, invariant $(\star)$ implies
that every agent assigned a room in $S_{k+1}$ ranks that room as her top
remaining choice.
Hence, these agents form TTC cycles in round $k+1$ and are assigned
their rooms exactly as in $\sigma$.

\medskip
By induction, all TTC rounds assign rooms exactly as in $\sigma$.
Therefore, applying TTC to $\sigma'$ yields $\sigma$, and hence
$f(\sigma') = \sigma$.
\end{proof}

\subsection{Completeness}

For any allocation $A$ such that $f(A)=\sigma$, we have
\[
A \in \texttt{invTTC}(\sigma).
\]

\begin{theorem}[Completeness of \texttt{invTTC}]
Let $A$ be an allocation such that applying TTC to $A$ yields $\sigma$.
Then $A$ is generated by $\texttt{invTTC}(\sigma)$.
\end{theorem}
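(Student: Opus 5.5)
Fix an allocation $A$ with $f(A)=\sigma$. We run forward TTC on $A$ and record its rounds: let $T_1,T_2,\dots,T_m$ be the sets of rooms traded in rounds $1,\dots,m$, and write $U_k=T_1\cup\cdots\cup T_{k-1}$ for the rooms already removed before round $k$. Every room of $T_k$ lies on a cycle in round $k$. Because the final allocation is $\sigma$, each agent who receives a room in $T_k$ ranks that room first in $\mathcal{R}\setminus U_k$. The plan is to show that \texttt{devour}, started from \texttt{dressup} applied to $\sigma$, has a branch of its BFS that undoes these rounds in reverse order. The branch processes $T_1$ first: in the inverse direction, the last TTC round is the outermost layer, but the rooms whose owners top-rank them among all rooms are exactly the ones \texttt{dressup} circles initially. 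So the inverse peels off $T_1,T_2,\dots$ in that order, with the set of squared rooms $S$ growing from $\emptyset$ to $\mathcal{R}$. This matches the Soundness proof, where $S_1$, squared last, corresponds to TTC round~1.

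We prove by induction on $k$ the following invariant. Some state $s_k$ is reached in the BFS with these properties:
\begin{itemize}
\item its squared set is $U_k$;
\item every room in $U_k$ sits at the agent holding it in $A$;
\item every room outside $U_k$ sits at the agent holding it in $\sigma$;
\item every room of $T_k$ is circled in $s_k$.
\end{itemize}
The last property follows from the third, since the \texttt{dressup} condition $o^{S}_i(1)=r_i$ with $S=U_k$ is exactly the TTC top-choice property above. The base case $k=1$ is the initial call to \texttt{dressup} with $\mathcal{D}=\emptyset$.

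For the inductive step we must produce a state in the BFS in which all of $T_k$ is squared and rearranged as in $A$. In round $k$ of TTC, each cycle moves each room of $T_k$ from its $A$-owner to its $\sigma$-owner. So the map sending each $\sigma$-owner to its $A$-owner restricts to a permutation $\pi_k$ of the circled positions of $T_k$; its non-fixed points are the rooms actually traded, and its fixed points are self-loops. Applying $\pi_k$ through the permutation loop of \texttt{devour} squares the moved rooms automatically. The fixed points are then squared one at a time through the promotion loop. Since promotion does not unmark other circles (promoting only removes rooms from preference lists, and each $T_k$-owner's top remaining room is in $T_k$), the remaining circles of $T_k$ stay circled throughout. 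After these steps, \texttt{dressup} re-tags the state with $\mathcal{D}=U_{k+1}$ and yields $s_{k+1}$. At $k=m+1$ all rooms are squared and the state equals $A$, so $A$ is added to $\mathcal{Z}$.

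The main obstacle is this inductive step, because the pseudocode does not perform the transition in one move. The permutation loop permutes \emph{all} circled indices $\mathcal{C}$, and $\mathcal{C}$ may contain circled rooms outside $T_k$. Also, the promotion loop promotes single rooms and reuses a $\mathcal{D}$ computed earlier. We would handle this as follows:
\begin{itemize}
\item Choose the permutation that acts as $\pi_k$ on $T_k$ and as the identity on the other circled rooms. Extra circles that are not moved remain circles, since their tag is copied and not promoted, so they cause no harm.
\item Argue that \texttt{dressup} only ever changes a $-1$ tag to $0$ when the owner's filtered top room equals the room held, which is monotone in $\mathcal{D}$. Hence circles of $T_k$ are never lost across intermediate states.
\item Use the visited set $\mathcal{V}$ only to note that revisiting a state cannot block reaching $s_{k+1}$, since BFS explores every successor of every first-visited state.
\end{itemize}
If the stale $\mathcal{D}$ in the promotion loop turns out to matter, we would instead rely on the next permutation step's recomputation of $\mathcal{D}$, via the identity permutation, to restore the correct tags before proceeding.
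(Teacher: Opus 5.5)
Your proof is correct and follows the same strategy as the paper. Both exhibit a branch of the \texttt{devour} search that undoes the TTC run on $A$ one round at a time: \texttt{dressup} circles each round's rooms, a permutation of circled positions realizes the trades, promotion squares the rooms, and exhaustiveness of the search then yields $A$. Your version is tighter than the paper's in three ways: peeling $T_1$ first (with the squared set growing from $\emptyset$) is the order actually consistent with \texttt{dressup} starting at $\mathcal{D}=\emptyset$, whereas the paper's proof speaks of processing $C_k,\dots,C_1$; you make explicit that self-loop rooms can only be squared by the promotion loop; and you note, as the paper does not, that the promotion loop must be read as acting on the current state's rooms, although your identity-permutation fallback repairs only the stale $\mathcal{D}$, not the stale $r'$ and $\sigma$ that the literal pseudocode also carries over from the last permutation.
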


\begin{proof}
Fix a preference profile $\succ$ and an allocation $A$ with
$f(A)=\sigma$.
Consider the execution of TTC on $A$, and let
$C_1, C_2, \dots, C_k$ denote the sequence of agent cycles eliminated by
TTC, in chronological order.
Let $S_j$ be the set of rooms assigned in cycle $C_j$.

We construct a path in $\texttt{invTTC}(\sigma)$ by reversing this TTC
execution.
Starting from $\sigma$, we process the cycles in reverse order
$C_k, C_{k-1}, \dots, C_1$.
At stage $j$, the rooms in $S_j$ are precisely those that are assigned
simultaneously in the $j$-th TTC round.
Hence, each such room is the top remaining choice of its assigned agent
among rooms not yet removed.

By the definition of \texttt{dressup}, these rooms are marked as circled.
The \texttt{devour} operation allows arbitrary permutations of circled
rooms, and therefore can realize exactly the assignment induced by
$C_j$.
Promoting the rooms in $S_j$ to squares corresponds to fixing the outcome
of cycle $C_j$ and removing these rooms from further consideration.

Repeating this procedure for $j = k, k-1, \dots, 1$ constructs an inverse
execution consistent with $A$.
Since \texttt{invTTC} exhaustively enumerates all sequences of admissible
inverse operations, the allocation $A$ is generated and hence belongs to
$\texttt{invTTC}(\sigma)$.
\end{proof}

\subsection{Time Complexity}
We explain the time complexity analysis of the proposed algorithm in terms of the steps mentioned in Section~\ref{sec:algo}. \\
\textbf{1) Initialization.} There are $O(n!)$ possible allocations hence the step takes $O(n!)$ time.\\   
\textbf{2) Forward TTC.} This step is run once for each PO class. Hence, takes $O(|\mathcal{P}|\ n^2)$.\\
\textbf{3) Inverse reconstruction of allocations} 
% The \texttt{dressup()}) procedure takes $O(n^2)$ time to filter out the fixed rooms from each preference. In worst case, \texttt{devour()}) procedure will run for each permutation of rooms (line 14, Algorithm~\ref{alg:devour}) and each iteration it runs \texttt{dressup()}. Hence, this step takes $O(n!n^2)$.
% Hence, the total time complexity is:
% \[
% T_{ITEA}(n) = O(|P|\cdot n! \cdot n^2).
% \]

For each Pareto-optimal allocation \( p_i \), the algorithm executes an inverse traversal through the \texttt{devour()} routine, which enumerates all tagged states corresponding to \( f^{-1}(p_i) \).
Let \( k_i = |f^{-1}(p_i)| \).  
Each state is processed once, with a cost of \( O(n^2) \) (due primarily to preference filtering in \texttt{dressup()}).

Thus, the total cost per equivalence class is:
\[
T_i = O(k_i \cdot n^2).
\]

Since all classes are disjoint and cover the domain \( \mathcal{A} \),
\[
\sum_i k_i = n!,
\]
and hence the total time complexity is:
\[
T_{ITEA}(n) = O(|\mathcal{P}| \cdot n^3)
    + O(n! \cdot n^2),
\]

\iffalse
The overall time complexity consists of two main components, forward TTC and inverse TTC Enumeration.
\begin{enumerate}
    \item \textbf{Forward TTC} 
\item \textbf{Inverse TTC Enumeration}
For each Pareto-optimal allocation \( p_i \), the algorithm executes an inverse traversal through the \texttt{devour()} routine, which enumerates all tagged states corresponding to \( f^{-1}(p_i) \).
Let \( k_i = |f^{-1}(p_i)| \).
\textcolor{red}{"Each state is processed once, with a cost of \( O(n^2) \) (due primarily to preference filtering in \texttt{dressup()})." Where is the cost of reaching that state first? --constant time since we are using a queue}
Thus, the total cost per equivalence class is $T_i = O(k_i \cdot n^2)$.
Since all classes are disjoint and cover the domain \( \mathcal{A} \),
$\sum_i k_i = n!$.
\end{enumerate}
Hence, the total time complexity is:
\[
T(n) = O(n! \cdot n^2).
\]
\fi

\textbf{Comparison with Brute-Force TTC} The brute-force enumeration approach applies the TTC mechanism separately to each of the \( n! \) possible initial allocations. If each TTC execution requires \( O(n^3) \) time (as in a naive
list-based implementation with repeated preference filtering), then the
overall cost of the brute-force baseline is
\[
    T_{\text{brute}}(n) = O(n! \cdot n^3).
\]

The proposed inverse-TTC algorithm avoids this redundancy by running TTC
only once per distinct Pareto-optimal outcome and then reconstructing
the entire set of equivalent initial allocations through inverse
enumeration.  
Although each individual TTC call still incurs the same
\( O(n^3) \) cost, the number of TTC invocations is reduced from
\( n! \) to the number of distinct Pareto-optimal outcomes,
denoted \( |\mathcal{P}| \ll n! \).
The remaining inverse reconstruction steps scale with
the size of each preimage \( f^{-1}(p_i) \) and collectively
cover all \( n! \) allocations exactly once.

Thus, the total cost of the proposed algorithm can be expressed as
\[
   T_{ITEA}(n) = O(|\mathcal{P}| \cdot n^3) + O(n! \cdot n^2),
\]
where the first term corresponds to forward TTC computations and
the second to inverse exploration via \texttt{devour()}.

\iffalse
\begin{table}
    \centering
    \begin{tabular}{l|c|c|c}
        \toprule
\textbf{Method} & \textbf{TTC calls} & \textbf{Per-call cost} & \textbf{Total cost} \\
\hline
Brute-force &
\( n! \) &
\( O(n^3) \) &
\( O(n! \cdot n^3) \) \\[2mm]
ITEA &
\( |\mathcal{P}| \ll n! \) &
\( O(n^3) \) &
\( O(|\mathcal{P}| \cdot n^3)\) \\
 &
 &
 &
\(+O(n! \cdot n^2) \) \\
\hline
\end{tabular}
\caption{Comparison between brute-force and proposed inverse-TTC algorithms.}
\label{tab:ttc_comparison}
\end{table}

\fi
In summary, both methods share the same asymptotic per-call TTC cost,
but the proposed algorithm eliminates the exponential redundancy of
repeatedly applying TTC to equivalent initial allocations.  Each initial
allocation is processed exactly once through inverse reconstruction,
leading to an empirical runtime improvement approaching a factor of
\( O(n! / |\mathcal{P}|) \).

If forward TTC is implemented using pointer-based preference tracking,
each TTC execution runs in $O(n^2)$ time. In this case, both brute-force
enumeration and the proposed inverse-TTC method have a worst-case time
complexity of $O(n! \cdot n^2)$. Nevertheless, the proposed method
invokes the TTC mechanism only $|\mathcal P|$ times while brute-force enumeration invokes TTC on
all $n!$ initial allocations.

\section{Empirical analysis}

We conducted an experiment to verify if ITEA performed better than brute force approach. For each $n\in \{ 3,4,...,9 \}$ we generated 100 random preference profiles resulting in a total of 700 test instances. Preference profiles were generated uniformly at random over all strict rankings. For each profile, we computed the complete Pareto-optimal frontier using both methods in Python.  

Utilized identical TTC subroutines to ensure fair comparison. Execution time was measured end-to-end, including forward TTC calls and inverse reconstruction. The range of $n$ is chosen such that the mechanisms are computable in a reasonable time, yet the experiment yields an insightful result.

We also checked that the pareto frontier outputs were consistent. While we observed that brute-force TTC enumeration performs comparably and in some cases slightly better than ITEA for small n values (n=3,4), ITEA consistently outperforms as n increases.

\begin{figure}
    \centering
\includegraphics[width=0.95\linewidth]{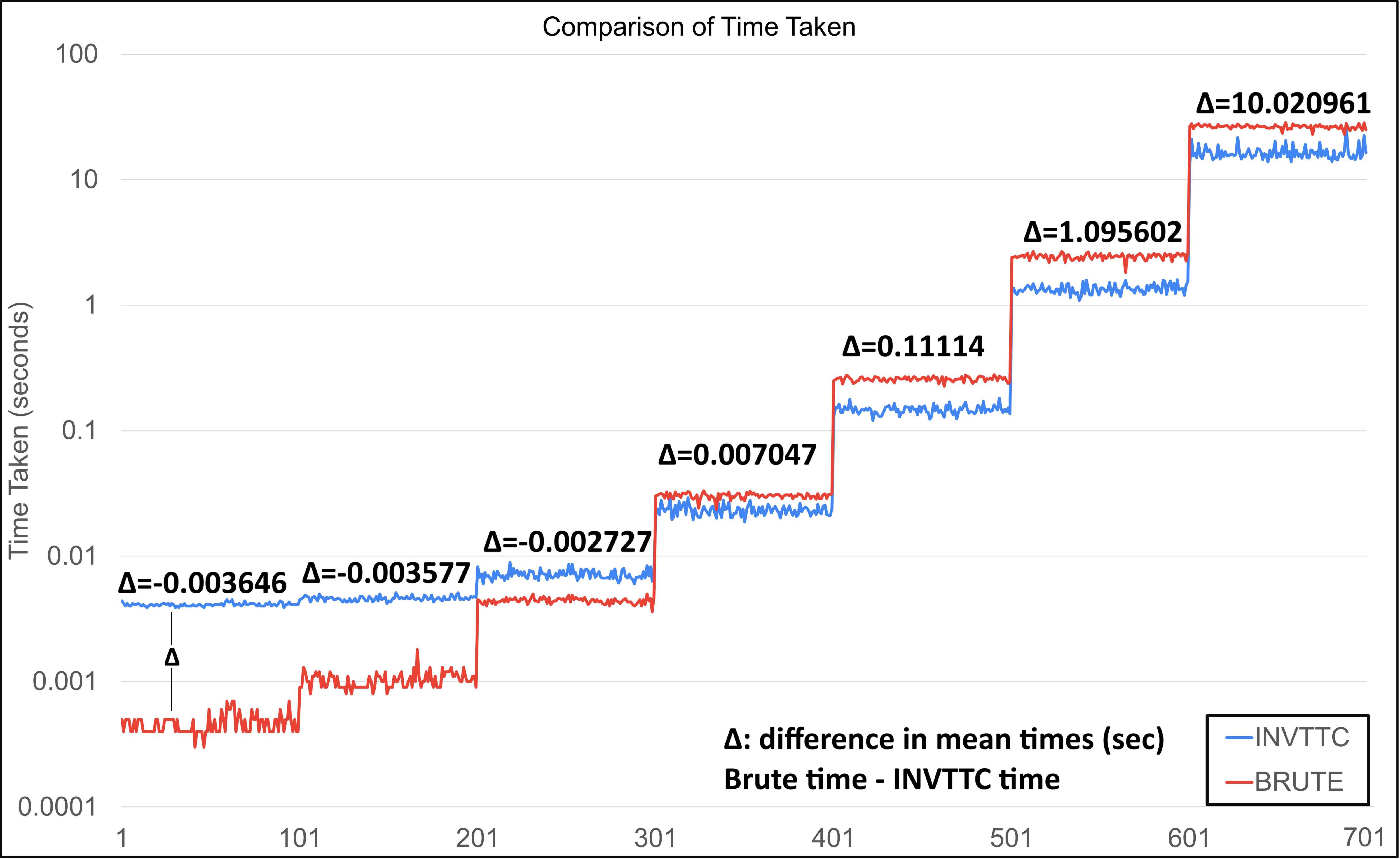}
    \caption{Time comparison for n=3 to 9 with 100 instances each}
    \label{fig:placeholder}
\end{figure}

\section{Discussion}

Computing the complete Pareto front is inherently computationally expensive. While ITEA cannot avoid the n! factor in generating allocations, it improves performance by a factor of $|P|/n!$ through strategic application of TTC. Experiments on 700 instances show that the average Pareto front size $|P|$ is $2.78, 5.50, 12.72, 27.41, 76.37, 165.75,$ and $485.93$ for $n=3$ to $9$, respectively, consistently and significantly below $n!$. Only in the worst-case scenario, with $|P|= n!$, does the factorial complexity remain unavoidable. 

\textbf{Practical value and scalability of ITEA.} Experimental results over 700 instances demonstrate that the ratio $\frac{|P|}{n!}$ decreases sharply with $n$, dropping from about $46\%$ at $n=3$ to $0.13\%$ of the total allocation space ($n!$) at $n=9$. This means that ITEA avoids $99.87\%$ of the redundant TTC calls required by brute-force. This confirms that ITEA’s advantage over brute force increases with problem size. For $n=9$, ITEA invokes TTC approximately $486$ times, compared to brute force’s $3,62,880$ invocations, providing a speedup of over $700$ times in forward TTC calls.   

\textbf{Optimizing Over the Pareto Frontier.}
Choosing a single PO from the set $P$ depends on the problem and context, and we view ITEA as a principled first step that enables such secondary optimization. Once ITEA explicitly enumerates $P$, selecting an allocation reduces to evaluating any polynomial-time computable criterion over the finite set $P$. Determining a Pareto-optimal allocation that maximizes social welfare is NP-hard \cite{BIRO2021614}. However, a brute-force approach that iterates over $P$ and selects the allocation that maximizes social welfare yields a solution with reduced complexity. Similarly, objectives such as finding Pareto-optimal allocations with minimum total envy, or the minimum number of envy agents, can be addressed by modifying the algorithms in \cite{hosseini2024degree}, for example, by first removing the hall-violator houses and then applying ITEA. 

%Depending on the context, for example, maximizing egalitarian or utilitarian welfare, the brute-force approach could be to iterate over $P$ and pick the allocation that optimizes the chosen objective. To the best of our knowledge, no polynomial-time algorithm to date can identify the PO allocation that optimizes a secondary criterion, for instance, minimizing envy, without either enumerating the Pareto frontier or risking incompleteness.

\textbf{Enumeration Complexity.}
From an enumeration-complexity perspective, ITEA is output-sensitive, as its runtime depends on both the input size and the structure of the Pareto frontier through the inverse TTC equivalence classes. However, the current inverse traversal via \texttt{devour()} does not guarantee polynomial delay between consecutive outputs, since exponentially many intermediate tagged states may still be explored before generating the next Pareto allocation. Establishing stronger guarantees, such as OutputP, IncrementalP, or polynomial-delay (DelayP) enumeration, remains an interesting direction for future work.

The paper's primary focus is to enumerate the Pareto frontier. The other objectives, such as fairness, can be optimized over the enumerated set $P$. The paper proposes designing efficient selection rules over the enumerated frontier as an open problem. 

%\textbf{Finding Fair PO allocation.} 
%The outcomes of TTC are not always fair. For example, with an average of 485 distinct PO allocations at $n=9$, the frontier contains allocations that can vary substantially in their fairness and welfare properties, which is precisely why explicit enumeration of $P$ is valuable for a designer seeking to optimize any such criterion. However, obtaining a match and repairing the obtained matching could be a clever way to achieve the idea of finding fair PO allocation, and we look forward to this suggestion.

\section{Limitations and future work}

A limitation of the proposed ITEA arises in small instances. For n=3,4, the factorial search space is sufficiently small that brute-force TTC enumeration remains competitive, and in some cases even outperforms the inverse approach due to lower constant overheads. Consequently, the computational advantages of inverse-TTC become meaningful only for larger problem sizes, where redundant TTC evaluations dominate brute-force runtime.

An important direction for future work is outcome selection from the Pareto-optimal set. While the algorithm efficiently characterizes the full Pareto frontier, it does not prescribe how to choose a single allocation. Designing efficient selection rules that extract Pareto-optimal outcomes satisfying additional criteria such as fairness objectives or incentive related constraints like truthfulness, would substantially enhance the practical relevance of the framework.

\section{Conclusion}

This paper presented an inverse enumeration approach for computing the entire Pareto frontier in one-sided matching problems under strict ordinal preferences. By inverting the Top Trading Cycles mechanism, the
proposed Inverse-TTC Enumeration Algorithm (ITEA) partitions the space of all initial allocations into disjoint equivalence classes, each corresponding to a unique Pareto-optimal outcome. We proved that the
algorithm is both sound and complete, ensuring that every and only those allocations leading to a given Pareto-optimal outcome are enumerated. Compared to brute-force enumeration, ITEA eliminates redundant TTC
executions and achieves a substantial reduction in computational cost, while still processing each initial allocation exactly once. 

% \section{}

% \appendix

%% The file named.bst is a bibliography style file for BibTeX 0.99c
\bibliographystyle{named}
\bibliography{ijcai26}

\end{document}